\documentclass[11pt]{article}

\usepackage[margin=1in]{geometry}
\usepackage{amsmath}
\usepackage{amssymb}
\usepackage{amsthm}
\usepackage{tabularx}
\usepackage{booktabs}
\usepackage{array}
\usepackage{float}
\usepackage{tikz}
\usepackage{pgfplots}
\pgfplotsset{compat=1.18}
\usepackage[colorlinks=true,citecolor=blue,linkcolor=blue,urlcolor=blue]{hyperref}

\newtheorem{theorem}{Theorem}[section]

\newtheorem{corollary}[theorem]{Corollary}
\newtheorem{proposition}[theorem]{Proposition}
\newtheorem{definition}[theorem]{Definition}
\newtheorem{assumption}[theorem]{Assumption}
\theoremstyle{remark}
\newtheorem{remark}{Remark}[section]

\begin{document}
	
	\title{Scarcity Is Not Enough: Structural Limits of
		Linear Sybil Cost Under Parallelizable Resources}
	
	\author{
		Homayoun Maleki\thanks{Corresponding author: \texttt{h.maleki@deusto.es}} \\
		DeustoTech, University of Deusto, Bilbao, Spain
		\and
		Nekane Sainz \\
		DeustoTech, University of Deusto, Bilbao, Spain
		\and
		Jon Legarda \\
		DeustoTech, University of Deusto, Bilbao, Spain
		\and
		Igor Santos-Grueiro \\
		International University of La Rioja (UNIR), Logro\~{n}o, Spain
	}
	
	\date{}
	
	\maketitle
	
	\begin{abstract}
		Permissionless systems resist Sybil attacks by binding influence
		to scarce resources. Yet influence concentration persists across
		systems built on computation, capital, and other reusable resources
		despite substantial differences in protocol design. This raises a
		fundamental question: is influence concentration primarily a
		consequence of protocol rules, or of the structural properties of
		the underlying resource itself?
		
		In this paper, we address this question from a resource-centric
		perspective through the adversarial cost function $C(s,T)$,
		defined as the minimum expenditure required to sustain
		influence equivalent to controlling $s$ independent
		participants over a time horizon of length $T$. Using this framework, we develop an axiomatic resource
		taxonomy that connects resource structure to adversarial cost
		scaling.
		
		Our analysis shows that resource-mechanism pairs satisfying
		divisibility, additivity of influence, temporal reusability, and
		identity transferability admit influence amortization, yielding
		$C(s,T)=o(sT)$. Conversely, throughput-bounded,
		non-transferable, window-local resources enforce
		$C(s,T)=\Omega(sT)$, with marginal cost
		$\Delta(s,T)=\Omega(T)$ increasing with the time horizon.
		Together, these results reveal a fundamental asymptotic
		separation between resource classes that admit influence
		amortization and those that enforce linear cost.
		
		These results shift the focus of decentralization from protocol
		design to resource design. If influence concentration is a
		structural consequence of the properties that make resources
		parallelizable, then redesigning consensus rules alone cannot
		eliminate it. The same properties also enable concentrated
		control to be projected across many nominally distinct
		participants through delegation, pooling, or identity
		replication, obscuring the relationship between visible
		identities and underlying control. The search for stronger
		decentralization must therefore begin with the choice of the
		underlying resource.
		
	\end{abstract}

	\noindent\textbf{Keywords:} Sybil resistance, influence concentration,
	parallelizable resources, throughput-bounded resources, adversarial
	cost scaling, structural limits, resource taxonomy, permissionless
	systems
	
	\section{Introduction}
	\label{sec:intro}
	
	Permissionless systems rest on a radical premise: that open
	participation, without gatekeepers or trusted authorities,
	can produce secure and fair collective outcomes.
	The mechanism that makes this possible---binding influence
	to the expenditure of a scarce resource---has been the
	cornerstone of distributed consensus since
	Nakamoto~\cite{nakamoto2008bitcoin}.
	Yet deployment has revealed a persistent gap between
	this promise and reality.
	Bitcoin's hash power is concentrated in a handful of
	mining pools~\cite{gencer2018decentralization}.
	Ethereum's stake is dominated by a small number of
	liquid staking providers~\cite{lido2023concentration}.
	
	This paper identifies a structural explanation for this phenomenon.
	
	\subsection{Resource-Based Defenses and Their Shared Blind Spot}
	\label{ssec:sybil}
	
	The foundational obstacle in permissionless systems is the
	Sybil attack~\cite{douceur2002sybil}: a single adversary
	generates many identities to amplify influence beyond any
	honest participant.
	The standard response is to make identity costly by tying
	participation to a scarce resource.
	Proof-of-Work binds influence to computational
	effort~\cite{nakamoto2008bitcoin,garay2015backbone}
	(specifically, to mining hardware capacity as a
	reusable stock, and energy as a per-window flow).
	Proof-of-Stake binds it to financial
	capital~\cite{kiayias2017ouroboros}.
	Graph-based defenses~\cite{yu2006sybilguard,yu2008sybillimit}
	constrain identity placement via social structure.
	Identity-binding mechanisms~\cite{borge2017pop,ford2020pseudonym}
	restrict credential issuance to verified participants.
	
	Despite their diversity, these approaches share a common
	blind spot.
	The original insight behind resource-weighted
	participation---introduced by Proof-of-Work---was
	precise: binding influence to a scarce resource
	prevents cheap identity creation from amplifying
	adversarial power.
	Subsequent work has studied this principle extensively,
	asking whether an adversary's \emph{aggregate} resource
	share crosses a safety threshold.
	
	What has received far less attention is a different
	question: does the \emph{structure} of that resource
	permit influence to be concentrated at sublinear cost?
	A resource that is divisible, transferable, and reusable
	allows a stock of size $O(s)$ to sustain $s$-unit
	influence over an arbitrarily long horizon without any
	$T$-proportional renewal cost. The critical issue is
	therefore not merely scarcity, but whether influence can
	be amortized over time.
	
	More importantly, the same resource properties allow
	concentrated control to be projected across many
	nominally distinct participants. A mining pool, for
	example, may aggregate a large concentration of hash
	power while distributing block-production opportunities
	across thousands of visible miners. Similar effects arise
	through delegation and other coordination mechanisms,
	where a single underlying concentration of resources can
	appear as a diverse population of independent actors.
	The challenge is therefore not only influence
	concentration, but the growing disconnect between
	visible participation and effective control.

	This paper characterizes exactly which resource properties
	permit this temporal amortization, and which ones prevent it.
	
	\subsection{Why Concentration Emerges as a Structural Consequence}
	\label{ssec:concentration}
	
	The concentration patterns observed across deployed systems
	are often discussed as protocol-level phenomena.
	Our framework suggests that they may instead arise from the
	structural properties of the resources on which those
	protocols are built.
	
	In Bitcoin, mining pools aggregate hash power from thousands
	of participants into a small number of coordinated entities.
	Gencer et al~\cite{gencer2018decentralization} document
	that the top-four pools routinely control more than half of
	the network's total hash power.
	The reason may be structural rather than purely procedural.
	Mining hardware capacity---the reusable stock component of
	Proof-of-Work---can be accumulated, redistributed, and
	reused over time. A sufficiently large hardware base can
	therefore sustain influence indefinitely, while being
	allocated across many miners, pools, or operational
	identities. The per-block energy cost is a separate
	window-local flow component and is analyzed independently.
	
	Proof-of-Stake replaced computational effort with
	financial capital, eliminating the window-local energy
	flow of PoW while retaining a similar stock structure.
	Financial capital can likewise be accumulated,
	redistributed, and reused over time.
	Delegation mechanisms allow a concentrated capital base to
	support influence across many validator identities without
	requiring proportional renewal costs in each participation
	window~\cite{kiayias2017ouroboros,saleh2021blockchain}.
	The outcome is familiar: as of 2023, Lido Finance alone
	controlled approximately thirty percent of all staked
	ETH~\cite{lido2023concentration}.
	
	The same structural argument extends beyond PoW and PoS.
	Whenever a resource can be accumulated, redistributed, and
	reused, a concentrated resource base can sustain influence
	over an arbitrarily long horizon at cost that grows far
	more slowly than the duration of participation.
	Moreover, the same concentration of resources can often be
	projected through many nominally distinct participants.
	Pooling, delegation, and related coordination mechanisms
	allow a single underlying concentration of control to
	appear as a large and diverse population of participants,
	obscuring the relationship between visible participation
	and effective control.
	
	From this perspective, concentration is not merely a
	byproduct of specific protocol choices. It may emerge from
	the properties of the resource itself. The central question
	is therefore not only how influence is allocated, but
	whether the underlying resource permits influence
	concentration to be sustained at sublinear cost over time.

	\subsection{The Missing Question}
	\label{ssec:missing}
	
	The observation above points to a question that existing
	analyses rarely address:
	
	\begin{quote}
		\emph{What structural properties of a security resource
			determine whether sustained influence can be achieved at
			sublinear cost over time?}
	\end{quote}
	
	Most deployed systems have approached concentration as a
	protocol-design problem. Mining-pool dominance, validator
	concentration, and staking monopolies are typically
	attributed to imperfect incentives, governance rules, or
	participant behavior. This perspective naturally motivates
	new protocol mechanisms intended to improve decentralization.
	
	Our starting point is different. Rather than asking how a
	protocol allocates influence, we ask whether the underlying
	resource itself permits influence to be accumulated,
	redistributed, and reused over time. If so, concentration
	may persist across many different protocol designs because
	it is enabled by the structural properties of the resource.
	
	We formalize this question through the adversarial cost
	function $C(s,T)$: the minimum expenditure required to
	sustain influence proportional to $s$ independent
	participation units over $T$ consecutive windows, whether
	through identity replication, delegation, or pooling.
	Two regimes are particularly important. When
	$C(s,T)=o(sT)$, influence can be sustained through
	resource amortization, allowing a concentrated resource
	base to support long-term influence at sublinear cost.
	When $C(s,T)=\Omega(sT)$, sustained influence requires
	cost that grows proportionally with both scale and time.
	
	The central goal of this paper is to identify which
	resource properties lead to each regime. To do so, we
	develop an axiomatic taxonomy linking resource structure
	to the scaling behavior of $C(s,T)$. The resulting
	framework distinguishes resource classes that admit
	influence amortization from those that enforce linear
	cost, providing a resource-centric perspective on
	decentralization and Sybil resistance.

	\subsection{This Paper}
	\label{ssec:thispaper}
	
	This paper provides an axiomatic resource taxonomy for
	permissionless systems and establishes a sharp asymptotic
	separation between two resource classes.
	
	The \emph{negative half of the taxonomy}
	(Theorem~\ref{thm:impossible}) identifies a class of
	resource-mechanism pairs that admit \emph{influence
		amortization}. Specifically, any resource-mechanism pair
	satisfying divisibility, additivity of influence, temporal
	reusability, and identity transferability yields
	$C(s,T)=o(sT)$.
	Mining hardware capacity, financial capital, storage, and
	delegated stake all fall into this class. (PoW energy,
	being window-local and non-reusable, does not satisfy
	temporal reusability and is analyzed separately as a flow
	component.)
	
	The \emph{positive half of the taxonomy}
	(Theorem~\ref{thm:linear}) introduces
	\emph{throughput-bounded resources}---in which each
	participation unit requires an independent channel, the
	channel output is non-transferable, and allocation does not
	carry over across time windows. We show that such resources
	enforce $C(s,T)=\Omega(sT)$: each additional unit of
	sustained influence incurs marginal cost
	$\Delta(s,T)=\Omega(T)$, growing with the time horizon.
	
	Together, these two halves establish a sharp asymptotic
	separation: parallelizable resources admit influence
	concentration at stock cost $O(s)$, with no
	$T$-proportional renewal cost, whereas throughput-bounded
	resources preclude such amortization.
	
	The separation yields a direct design rule---the
	\emph{Resource Substitution Theorem}: enforcing linear
	cost of influence concentration requires grounding
	participation in a resource that violates at least one
	parallelizability property. Within the scope of
	Definition~\ref{def:parallelizable}, achieving this
	boundary requires altering at least one of the structural
	properties of the underlying resource.

	\subsection{Contributions}
	\label{ssec:contributions}
	
	This paper develops an axiomatic framework for reasoning
	about influence concentration in permissionless systems.
	Rather than focusing on specific consensus protocols, we
	study how the structural properties of security resources
	shape the cost of sustaining influence over time. Our
	contributions are as follows.
	
	\begin{enumerate}
		
		\item \textbf{An adversarial cost framework for influence concentration.}
		
		We introduce the adversarial cost function $C(s,T)$,
		which measures the minimum expenditure required to
		sustain influence equivalent to $s$ independent
		participants over a time horizon of length $T$.
		The framework separates one-time stock costs from
		recurring flow costs and provides a common basis for
		analyzing influence concentration across heterogeneous
		resource classes.
		
		\item \textbf{An axiomatic resource taxonomy for adversarial cost scaling.}
		
		We identify a set of structural properties---divisibility,
		additivity of influence, temporal reusability, identity
		transferability, and throughput boundedness---that govern
		how adversarial cost scales over time. The taxonomy links
		resource structure directly to the asymptotic behavior of
		$C(s,T)$, allowing different resource classes to be
		compared within a common analytical framework.
		
		\item \textbf{A structural upper bound on influence cost under parallelizable resources.}
		
		We prove that any resource-mechanism pair satisfying the
		parallelizability properties admits influence amortization,
		yielding $C(s,T)=o(sT)$. Consequently, a resource base of
		size $O(s)$ can sustain $s$-unit influence over an
		arbitrarily long horizon without costs that grow
		proportionally with time. The result depends only on the
		structural properties of the resource-mechanism pair and
		is independent of the specific consensus protocol built
		upon it.
		
		\item \textbf{A complementary linear-cost resource class and a design principle for decentralization.}
		
		We identify throughput-bounded, non-transferable,
		window-local resources as a class that enforces
		$C(s,T)=\Omega(sT)$ and establish a sharp asymptotic
		separation from parallelizable resources. This separation
		yields the \emph{Resource Substitution Theorem}:
		achieving linear cost of influence concentration requires
		violating at least one of the structural properties
		associated with parallelizability. Within the scope of
		Definition~\ref{def:parallelizable}, this boundary cannot
		be crossed while preserving all four parallelizability
		properties.
		
	\end{enumerate}

	\subsection{Paper Organization}
	\label{ssec:organization}
	
	Section~\ref{sec:related} reviews related work.
	Section~\ref{sec:model} presents the system model.
	Section~\ref{sec:framework} develops the resource framework.
	Section~\ref{sec:impossibility} proves the sublinear cost bound.
	Section~\ref{sec:lowerbound} proves the linear lower bound.
	Section~\ref{sec:separation} establishes the structural
	separation, the Resource Substitution Theorem,
	and hybrid system behavior.
	Section~\ref{sec:instantiations} classifies concrete resource types.
	Section~\ref{sec:evaluation} illustrates the asymptotic separation.
	Section~\ref{sec:limitations} discusses limitations.
	Section~\ref{sec:conclusion} concludes.
	
	\section{Related Work}
	\label{sec:related}
	
	Prior work on Sybil resistance and permissionless systems
	addresses several distinct security questions.
	We survey the main research directions and identify
	the gap this paper fills.
	Table~\ref{tab:positioning} summarizes the distinction.
	
	\subsection{Resource-Based Sybil Defenses}
	\label{ssec:rw_resource}
	
	Douceur~\cite{douceur2002sybil} established that in the
	absence of a trusted authority, identity replication in
	open systems cannot be prevented.
	Nakamoto~\cite{nakamoto2008bitcoin} proposed binding
	participation to a scarce resource---computational
	effort---as a practical response to this impossibility.
	Formal analyses of Nakamoto-style
	consensus~\cite{garay2015backbone,ren2019nakamoto,pass2017hybrid,pass2017fruitchains}
	establish chain-growth, common-prefix, and chain-quality
	guarantees expressed as thresholds on adversarial hash power.
	Proof-of-Stake protocols---including
	Ouroboros~\cite{kiayias2017ouroboros,kiayias2018ouroboros,kiayias2020genesis},
	Snow White~\cite{pass2016snowwhite},
	and Algorand~\cite{gilad2017algorand}---replace
	computational effort with financial capital and derive
	analogous guarantees as thresholds on adversarial stake.
	These protocols collectively instantiate the
	aggregate-threshold adversarial model formalized in
	foundational Byzantine fault-tolerance
	work~\cite{lamport1982byzantine,castro1999pbft};
	Algorand additionally employs verifiable random
	functions~\cite{micali1999vrf} for committee selection.
	
	These analyses consistently reason about
	\emph{aggregate} adversarial resource ownership.
	Security guarantees take the form: if the adversary
	controls less than a threshold fraction of the total
	resource, the protocol is safe.
	They do not characterize whether a stock of size $O(s)$
	can sustain $s$-unit influence over $T$ windows at
	cost $o(sT)$---that is, without $T$-proportional renewal.
	The present paper addresses this question directly.
	
	\subsection{Empirical and Systems-Level Analyses}
	\label{ssec:rw_empirical}
	
	Empirical work documents the centralization patterns
	that resource-based mechanisms produce in practice.
	Gencer et al.~\cite{gencer2018decentralization} measure
	mining pool concentration in Bitcoin and Ethereum.
	Rosenfeld~\cite{rosenfeld2011analysis,rosenfeld2014analysis}
	analyzes pooled mining reward systems and hash-rate-based
	double-spending dynamics.
	Bonneau et al.~\cite{bonneau2015sok} and
	Narayanan et al.~\cite{narayanan2016bitcoinbook} survey
	the broader security and deployment landscape of Bitcoin.
	Gervais et al.~\cite{gervais2016security,gervais2014isbitcoin}
	quantify security-performance tradeoffs and decentralization
	properties of deployed proof-of-work systems.
	Work on selfish mining~\cite{eyal2014majority} and
	double-spending~\cite{kroll2013economics} analyzes
	adversarial strategies under aggregate resource models.
	Network-layer studies~\cite{neudecker2019network} document
	how propagation dynamics interact with pool-driven
	resource concentration.
	
	These works describe the \emph{symptoms} of structural
	resource properties---concentration, pooling, delegation---
	but do not formally characterize why such patterns emerge
	as a structural consequence of the resource properties
	considered in this model.
	The present paper provides this characterization.
	
	\subsection{Economic Analyses of Consensus}
	\label{ssec:rw_economic}
	
	Budish~\cite{budish2018economic} and
	Budish et al.~\cite{budish2024economic} analyze the
	economic limits of permissionless consensus, framing
	security in terms of the cost of mounting a majority attack.
	Carlsten et al.~\cite{carlsten2016instability} study
	incentive stability in the absence of block rewards.
	Saleh~\cite{saleh2021blockchain} provides an equilibrium
	analysis of Proof-of-Stake.
	
	These analyses model adversarial capability in terms of
	aggregate resource ownership and study equilibrium
	incentives or corruption costs.
	They do not study whether influence can be concentrated
	at sublinear cost through delegation or pooling---the
	question central to this paper.
	
	\subsection{Graph-Based and Social-Structure Defenses}
	\label{ssec:rw_graph}
	
	SybilGuard~\cite{yu2006sybilguard},
	SybilLimit~\cite{yu2008sybillimit}, and
	SybilControl~\cite{li2017sybilcontrol} bound adversarial
	influence by exploiting the observation that in social
	networks, the number of edges between honest and Sybil
	regions is small relative to internal connectivity.
	SybilInfer~\cite{danezis2009sybilinfer} applies
	probabilistic inference over trust graphs to identify
	adversarial clusters.
	Game-theoretic analyses~\cite{lesniewski2020analysis}
	study equilibrium conditions under which honest participants
	can deter adversarial identity creation.
	
	These approaches address a different question: where
	Sybil identities can be placed within a social graph,
	rather than how the cost of adversarial influence scales
	over time.
	They provide no characterization of $C(s,T)$ as $s$
	or $T$ grows, and their guarantees depend on
	graph-topology assumptions that may not hold in
	practice~\cite{wang2019attack}.
	
	\subsection{Identity-Binding Mechanisms}
	\label{ssec:rw_identity}
	
	Proof-of-Personhood~\cite{borge2017pop} and pseudonym
	parties~\cite{ford2020pseudonym} bind credentials to
	real-world participants to enforce one-person--one-identity
	mappings.
	Reputation systems~\cite{resnick2000reputation} offer a
	related but distinct approach: rather than binding
	participation to verified identity, they tie influence
	to accumulated behavioral history.
	These mechanisms address \emph{admission}---restricting
	the creation of new identities---rather than the sustained
	cost of concentrating influence over time.
	Once admitted, influence may still be governed by a
	transferable, reusable resource.
	
	More recent deployed systems instantiate these admission
	mechanisms directly.
	World ID~\cite{worldcoin2023whitepaper} binds a single
	credential to a unique biological human via biometric
	verification, and maps onto the \emph{human real-time
		participation} class of Table~\ref{tab:taxonomy}: each
	verified human constitutes one per-actor channel.
	Privacy Pass~\cite{davidson2018privacypass} and its
	rate-limited extensions~\cite{chu2023ratelimited} bind
	anonymous credentials to per-user quotas, instantiating
	the \emph{rate-limited participation} class---subject to
	the channel-binding condition of
	Definition~\ref{def:throughput}.
	As with the device-bound and human-participation
	instantiations of Section~\ref{sec:instantiations}, the
	structural guarantees apply only to the extent that the
	underlying credential cannot be resold, shared, or
	generated by automated agents.
	
	\subsection{Capability and Rate-Limiting Architectures}
	\label{ssec:rw_capability}
	
	Hardware-rooted execution environments such as
	Intel SGX~\cite{costan2016intel} enforce per-device
	participation limits that bind resource generation
	to specific hardware.
	Capability-based systems~\cite{levy1984capability}
	and congestion-control mechanisms~\cite{jacobson1988congestion}
	impose per-actor throughput constraints in different
	deployment contexts.
	Bitcoin-NG~\cite{eyal2018bitcoinng} decouples block
	leadership from transaction throughput, instantiating
	a per-leader rate constraint as a protocol-level primitive.
	HoneyBadger BFT~\cite{miller2016honeybadger} makes
	explicit throughput bounds central to its communication
	complexity analysis.
	Alwen et al.~\cite{alwen2019blockchain} study
	sustained space and time complexity of sequential
	proof-of-work constructions.
	
	These works impose per-actor throughput constraints
	but analyze them in the context of access control,
	network stability, or computational hardness---not
	the cost of influence concentration.
	They do not establish the structural separation between
	parallelizable and throughput-bounded resource classes
	that this paper proves.
	
	\subsection{Positioning}
	\label{ssec:rw_positioning}
	
	The common thread across prior work is a focus on
	\emph{aggregate} adversarial capability: how much
	resource does an adversary control, and is it below
	a safety threshold?
	To our knowledge, prior work does not ask whether a
	fixed aggregate resource can be used to concentrate
	influence at sublinear marginal cost---whether through
	identity replication, delegation, or pooling---nor
	establish the structural conditions under which this
	is or is not possible.
	
	This paper provides that characterization.
	We identify the structural properties that determine
	whether adversarial cost scales as $o(sT)$ or
	$\Omega(sT)$, establish that parallelizable resources admit
	sublinear influence cost, introduce the throughput-bounded
	resource class and prove its linear cost guarantee,
	and derive a design principle for systems requiring
	linear cost of influence concentration.
	
	\begin{table*}[htbp]
		\caption{Positioning of This Work Relative to Prior Literature}
		\label{tab:positioning}
		\renewcommand{\arraystretch}{1.4}
		\begin{tabularx}{\linewidth}{@{}p{2.8cm}Xp{3.8cm}c@{}}
			\toprule
			\textbf{Research Line} & \textbf{Representative Works} & \textbf{Primary Security Question} & \textbf{Influence Conc. Cost?} \\
			\midrule
			Sybil attack literature
			& \cite{douceur2002sybil,yu2006sybilguard,yu2008sybillimit,danezis2009sybilinfer}
			& Can adversaries create multiple identities, and how can identity creation be restricted or detected?
			& No \\[4pt]
			Blockchain consensus security
			& \cite{lamport1982byzantine,castro1999pbft,garay2015backbone,ren2019nakamoto,kiayias2017ouroboros,gilad2017algorand}
			& Under what aggregate adversarial resource ownership do safety and liveness properties fail?
			& No \\[4pt]
			Economic analyses of consensus
			& \cite{budish2018economic,budish2024economic,kroll2013economics}
			& What equilibrium incentives and corruption costs arise in resource-weighted consensus?
			& No \\[4pt]
			Graph-based defenses
			& \cite{yu2006sybilguard,yu2008sybillimit,danezis2009sybilinfer}
			& Can Sybil proliferation be constrained via social-network structure?
			& No \\[4pt]
			Identity-binding mechanisms
			& \cite{borge2017pop,ford2020pseudonym}
			& How can systems enforce one-person--one-identity mappings?
			& No \\[4pt]
			Capability / rate-limiting
			& \cite{costan2016intel,levy1984capability,jacobson1988congestion}
			& Can per-actor throughput limits be enforced at the system level?
			& No \\
			\midrule
			\textbf{This paper}
			& ---
			& How do structural properties of security resources determine whether influence can be concentrated at sublinear cost?
			& \textbf{Yes} \\
			\bottomrule
		\end{tabularx}
	\end{table*}
	
	
	\section{System Model}
	\label{sec:model}
	
	We model an open and permissionless system in which
	identity creation is inexpensive and no admission
	authority exists.
	Influence is regulated by a security-weighting resource
	$\mathcal{R}$.
	Our objective is to characterize how structural
	properties of $\mathcal{R}$ determine whether influence
	can be concentrated at sublinear cost---whether through
	identity replication, delegation, or pooling.
	
	Table~\ref{tab:notation} summarizes the core notation.

	\subsection{Identities and Participation}
	\label{ssec:identity}
	
	An identity is a cryptographic key pair that authorizes
	participation.
	A single entity may control arbitrarily many identities
	at negligible cost~\cite{douceur2002sybil}, so influence
	may be acquired through \emph{identity replication} or
	\emph{resource aggregation}---the model captures both
	uniformly.
	
	Let $S_t \subseteq \mathcal{V}_t$ denote the adversarial
	identities at window $t$, with $|S_t| = s_t$.
	The adversary $\mathcal{A}$ is \emph{identity-unbounded}:
	it may create, retire, and coordinate an arbitrary
	number of identities at negligible syntactic cost.
	
	\subsection{Time Model}
	\label{ssec:time}
	
	Time proceeds in discrete windows $t = 1, 2, \ldots$,
	representing the minimal interval over which resource
	allocation and influence are refreshed.
	Security is evaluated over a horizon of $T$ windows.
	
	\subsection{Resource and Influence}
	\label{ssec:resource}
	
	For each identity $v \in \mathcal{V}_t$ and window $t$,
	let $r_v(t) \in \mathbb{R}_{\geq 0}$ denote the
	resource allocated to $v$.
	Influence is determined by a mapping
	\begin{equation}
		W_v(t) = f\bigl(r_v(t)\bigr),
	\end{equation}
	where $f : \mathbb{R}_{\geq 0} \to \mathbb{R}_{\geq 0}$
	is the protocol's resource-to-influence function;
	structural assumptions on $f$ are deferred to
	Section~\ref{sec:framework}.
	The total adversarial influence at window $t$ is
	$W_{\mathcal{A}}(t) = \sum_{v \in S_t} W_v(t)$.
	
	\begin{definition}[Activation Threshold]
		\label{def:threshold}
		There exists a constant $r_{\min} > 0$ such that
		an identity is active in window $t$ only if
		$r_v(t) \geq r_{\min}$.
	\end{definition}
	
	\subsection{Adversarial Model}
	\label{ssec:adversary}
	
	The adversary $\mathcal{A}$ is computationally
	unbounded but constrained solely by the structural
	properties of $\mathcal{R}$.
	It may instantiate, retire, and coordinate identities;
	allocate and reallocate resource; reuse resource across
	windows; and aggregate influence through delegation or
	pooling---whenever the resource structure permits.
	
	\subsection{Adversarial Cost}
	\label{ssec:cost}
	
	We distinguish between \emph{resource allocation} and
	\emph{new expenditure}, since the two differ for
	temporally reusable resources.
	
	\begin{definition}[Allocation and Expenditure]
		\label{def:allocation_expenditure}
		For each identity $v \in S_t$ and window $t$:
		\begin{itemize}
			\item $r_v(t) \geq 0$ is the resource
			\emph{allocated} to $v$---the amount drawn upon,
			whether freshly acquired or carried forward.
			\item $e_v(t) \geq 0$ is the resource
			\emph{newly expended}---the incremental cost
			actually incurred in that window.
		\end{itemize}
		For a reusable stock resource, $e_v(t) = 0$ for
		$t \geq 2$; for a window-local resource,
		$e_v(t) \geq r_{\min}$ must hold every window.
		Let $E_{\mathcal{A}}(t) = \sum_{v \in S_t} e_v(t)$.
	\end{definition}
	
	\begin{definition}[Adversarial Cost Function]
		\label{def:cost}
		Let $W_{\mathrm{unit}} = f(r_{\min})$.
		$C(s, T)$ denotes the minimum total \emph{new
			expenditure} required to achieve influence
		equivalent to that of $s$ independent participants
		over $T$ consecutive windows:
		\begin{equation}
			\label{eq:cost}
			C(s, T) \;=\; \inf \left\{
			A\!\left(\{r_v(1)\}\right)
			+ \sum_{t=1}^{T} E_{\mathcal{A}}(t)
			+ h(s,T)
			\;\Bigg|\;
			W_{\mathcal{A}}(t) \geq s \cdot W_{\mathrm{unit}},\;
			\forall\, t \in \{1,\ldots,T\}
			\right\},
		\end{equation}
		where $A(\{r_v(1)\})$ is the one-time acquisition
		cost of the initial stock and $h(s,T)$ is the
		coordination overhead.
	\end{definition}
	
	Because $C(s,T)$ is defined as an infimum over all
	admissible adversarial strategies, the model already
	captures adversaries that adapt identity allocation,
	resource distribution, and participation patterns over
	the horizon.
	The limitation discussed in Section~\ref{ssec:lim_scope}
	concerns richer dynamic models in which the
	resource-mechanism pair itself evolves in response to
	adversarial behavior.
	
	The stock--flow distinction is central.
	For a reusable resource, $\sum E_{\mathcal{A}}(t) = 0$
	once stock is acquired, so $C(s,T)$ grows only with
	$s$, not $T$.
	For a window-local resource,
	$\sum E_{\mathcal{A}}(t) \geq s \cdot T \cdot r_{\min}$,
	so $C(s,T)$ grows with both.
	
	We decompose $C(s,T)$ as
	\begin{equation}
		C(s, T) = C_{\mathrm{stock}}(s)
		+ C_{\mathrm{flow}}(s, T)
		+ h(s, T),
	\end{equation}
	where $C_{\mathrm{stock}}(s) = A(\{r_v(1)\})$
	is the one-time acquisition cost,
	$C_{\mathrm{flow}}(s,T) = \sum_{t=1}^{T} E_{\mathcal{A}}(t)$
	is the cumulative renewal expenditure, and $h(s,T)$
	is coordination overhead.
	
	\begin{assumption}[Amortized Coordination]
		\label{asm:coordination}
		The coordination overhead satisfies:
		\begin{enumerate}
			\item[(a)] $h(s, T) = o(sT)$; and
			\item[(b)] $h(s, T) - h(s-1, T) = o(T)$
			for each fixed $s$ as $T \to \infty$.
		\end{enumerate}
	\end{assumption}
	
	Both conditions are consistent with coordination
	infrastructure observed in practice, including mining
	pools, validator delegation services, and cloud-managed
	orchestration~\cite{rosenfeld2011analysis,
		bonneau2015sok,saleh2021blockchain}.
	Condition~(a) is used in Theorem~\ref{thm:impossible};
	condition~(b) bounds the marginal cost $\Delta(s,T)$.
	
	\subsection{Scaling Regimes}
	\label{ssec:regimes}
	
	\begin{definition}[Scaling Regimes]
		\label{def:regimes}
		A mechanism exhibits \emph{sublinear scaling} if
		$C(s, T) = o(sT)$, and \emph{linear scaling} if
		$C(s, T) = \Omega(sT)$.
	\end{definition}
	
	Under sublinear scaling, a resource base of size
	$O(s)$ can sustain influence equivalent to $s$
	participants over an arbitrarily long horizon.
	
	Under linear scaling, sustaining influence equivalent
	to $s$ participants over $T$ windows requires cost
	growing as $\Omega(sT)$.
	
	The central question of this paper is which scaling
	regime a given resource $\mathcal{R}$ induces.
	Section~\ref{sec:framework} provides the structural
	answer.
	
	\begin{table}[H]
		\caption{Core Notation}
		\label{tab:notation}
		\renewcommand{\arraystretch}{1.25}
		\begin{tabular}{@{}ll@{}}
			\toprule
			\textbf{Symbol} & \textbf{Meaning} \\
			\midrule
			$v$                  & Identity (key pair / participant instance) \\
			$t$                  & Time window index \\
			$T$                  & Time horizon (number of windows) \\
			$s$                  & Influence level in participant equivalents \\
			$r_v(t)$             & Resource allocated to identity $v$ at window $t$ \\
			$e_v(t)$             & New expenditure by identity $v$ in window $t$ \\
			$R_{\mathcal{A}}(t)$ & Total adversarial resource allocated at window $t$ \\
			$E_{\mathcal{A}}(t)$ & Total new adversarial expenditure in window $t$ \\
			$W_v(t)$             & Influence of identity $v$ at window $t$ \\
			$W_{\mathrm{unit}}$  & Baseline influence: $f(r_{\min})$ \\
			$f(\cdot)$           & Resource-to-influence mapping \\
			$S_t$                & Set of adversarial identities at window $t$ \\
			$C(s,T)$             & Min.\ cost to achieve influence equivalent to $s$ participant equivalents over $T$ windows \\
			$\Delta(s,T)$        & Marginal cost: $C(s,T) - C(s-1,T)$ \\
			$r_{\min}$           & Activation threshold \\
			$\tau$               & Per-channel throughput bound \\
			$h(s,T)$             & Coordination overhead \\
			\bottomrule
		\end{tabular}
	\end{table}
	\section{Resource Framework}
	\label{sec:framework}
	
	We characterize security-weighting resources by their
	structural properties---the constraints they impose on
	division, aggregation, reuse, and transfer across
	identities and time.
	These properties play a central role in determining
	whether influence concentration can be sustained at
	sublinear cost.
	
	\subsection{Structural Properties}
	\label{ssec:properties}
	
	\begin{definition}[Divisibility]
		\label{def:divisibility}
		$\mathcal{R}$ is \emph{divisible} if for any allocation
		$r \geq 0$ and integer $k \geq 1$, there exist
		$r_1, \ldots, r_k \geq 0$ with $\sum_{i=1}^{k} r_i = r$,
		and the adversary may distribute $r_i$ to identity $i$
		without additional structural cost.
	\end{definition}
	
	\begin{definition}[Additivity of Influence]
		\label{def:additivity}
		A mechanism \emph{induces additive influence} if
		\begin{equation}
			\sum_{v \in S_t} f\bigl(r_v(t)\bigr)
			\;=\;
			f\!\left(\sum_{v \in S_t} r_v(t)\right).
		\end{equation}
		Total influence depends only on aggregate allocation,
		not on how it is partitioned across identities.
	\end{definition}
	
	\begin{remark}
		Definition~\ref{def:additivity} is a property of
		the influence function $f$, not of the resource alone.
		Mechanisms that deliberately penalize concentration---
		such as quadratic voting, where $f(r) = \sqrt{r}$---
		do not satisfy it and fall outside the scope of
		Theorem~\ref{thm:impossible}.
		Theorem~\ref{thm:impossible} applies only to
		resource-mechanism pairs for which influence is
		additive in this sense; non-additive
		resource-to-influence mappings may induce different
		cost-scaling behavior.
		Characterizing the scaling regimes admitted by
		non-additive mechanisms is outside the scope of the
		present work.
	\end{remark}
	
	Additivity is not posited as a universal property of
	resource-to-influence mappings, but as the property
	that characterizes the influence mechanisms underlying
	the major resource-weighted systems motivating this
	work---including Proof-of-Work mining power,
	stake-weighted participation, delegated stake, and
	pooled resource ownership.
	Theorem~\ref{thm:impossible} therefore characterizes
	this practically dominant class of resource-mechanism
	pairs; mechanisms with non-additive influence functions
	are addressed by the scope limitation above.
	
	\begin{definition}[Temporal Reusability]
		\label{def:reusability}
		$\mathcal{R}$ is \emph{temporally reusable} if an
		allocation obtained in window $t$ may be applied in
		window $t+1$ without renewed identity-bound expenditure.
	\end{definition}
	
	\begin{definition}[Identity Transferability]
		\label{def:transferability}
		$\mathcal{R}$ is \emph{identity-transferable} if an
		allocation $r_v(t)$ assigned to identity $v$ may be
		reassigned to any other identity $v'$ without
		proportional cost.
	\end{definition}
	
	\begin{definition}[Throughput-Bounded Resource]
		\label{def:throughput}
		$\mathcal{R}$ is \emph{throughput-bounded} if each
		identity $v$ is associated with a dedicated
		participation channel $C_v$ satisfying:
		\begin{enumerate}
			\item \emph{Per-channel rate limit.}
			There exists $\tau > 0$ such that
			$r_v(t) \leq \tau$ for all $v$ and $t$.
			\item \emph{Non-transferability.}
			$r_v(t)$ cannot be simultaneously credited
			to any other identity $v' \neq v$.
			\item \emph{Window-locality.}
			Allocation in window $t$ does not carry over
			to window $t+1$; sustained participation
			requires renewed expenditure in every window.
		\end{enumerate}
		We assume $0 < r_{\min} \leq \tau$ throughout.
	\end{definition}
	
	\subsection{Resource Classes}
	\label{ssec:classes}
	
	\begin{definition}[Parallelizable Resource]
		\label{def:parallelizable}
		$\mathcal{R}$ is \emph{parallelizable} if it satisfies
		divisibility, temporal reusability, and identity
		transferability, and the induced mechanism satisfies
		additivity of influence.
	\end{definition}
	
	\begin{definition}[Throughput-Bounded Non-Parallelizable Resource]
		\label{def:nonparallelizable}
		$\mathcal{R}$ is a \emph{throughput-bounded
			non-parallelizable resource} if it satisfies
		Definition~\ref{def:throughput}.
	\end{definition}
	
	\subsection{Resource Taxonomy}
	\label{ssec:taxonomy}
	
	Table~\ref{tab:taxonomy} classifies representative
	resource types by the structural properties they satisfy.
	Proof-of-Work is decomposed into its capacity component
	(mining hardware: parallelizable stock) and its
	operational component (energy: window-local flow).
	
	\begin{table*}[htbp]
		\caption{Structural Taxonomy of Security Resources.
			PoW is decomposed into its capacity component
			(hardware: parallelizable) and operational component
			(energy: non-reusable flow).
			\textbf{Add.}\ indicates whether the mechanism
			induces additive influence
			(Definition~\ref{def:additivity}).}
		\label{tab:taxonomy}
		\renewcommand{\arraystretch}{1.35}
		\begin{tabularx}{\linewidth}{@{}p{2.4cm}ccccp{2.8cm}c@{}}
			\toprule
			\textbf{Resource} & \textbf{Div.} & \textbf{Add.}
			& \textbf{Reuse} & \textbf{Transfer}
			& \textbf{Notes} & \textbf{Scaling} \\
			\midrule
			PoW: hardware (capacity)
			& \checkmark & \checkmark & \checkmark & \checkmark
			& Parallelizable stock; mining pools instantiate
			amortization
			& $o(sT)$ \\[3pt]
			PoW: energy (operational)
			& \checkmark & \checkmark & $\times$ & \checkmark
			& Window-local flow; renewed each block interval
			& Linear in $T$\textsuperscript{†} \\[3pt]
			Financial stake (PoS)
			& \checkmark & \checkmark & \checkmark & \checkmark
			& Parallelizable stock; delegation markets
			instantiate amortization
			& $o(sT)$ \\[3pt]
			Social-graph trust
			& Partial & Partial & Partial & Partial
			& Guarantees depend on graph topology;
			no explicit $C(s,T)$ bound
			& Model-dep. \\[3pt]
			Device-bound execution
			& $\times$ & $\times$ & $\times$ & $\times$
			& Each device constitutes one independent channel
			per window
			& $\Omega(sT)$ \\[3pt]
			Human real-time participation
			& $\times$ & $\times$ & $\times$ & $\times$
			& Cognitive and temporal throughput limits bound
			channel capacity
			& $\Omega(sT)$ \\[3pt]
			Rate-limited participation\textsuperscript{$\ddagger$}
			& $\times$ & $\times$ & $\times$ & $\times$
			& Qualifies only when bound to non-transferable
			execution channels
			& $\Omega(sT)$ \\
			\bottomrule
			\multicolumn{7}{@{}l}{\footnotesize
				\textsuperscript{†}Energy is non-reusable but
				transferable; it falls outside the parallelizable
				class of Theorem~\ref{thm:impossible}.} \\
			\multicolumn{7}{@{}l}{\footnotesize
				\textsuperscript{$\ddagger$}Account-level rate limits
				without channel binding are bypassable through
				identity replication.} \\
		\end{tabularx}
	\end{table*}
	
	\section{Sublinear Cost Under Parallelizable Resources}
	\label{sec:impossibility}
	
	We prove that no resource-mechanism pair satisfying
	the four parallelizability properties can enforce
	linear cost of influence concentration.
	
	\begin{theorem}[Amortization Under Parallelizable Resources]
		\label{thm:impossible}
		Let $(\mathcal{R}, f)$ be a parallelizable
		resource-mechanism pair
		(Definition~\ref{def:parallelizable}).
		Under Assumption~\ref{asm:coordination},
		\begin{equation}
			C(s, T) = o(sT).
		\end{equation}
		In particular, $\Delta(s, T) = C(s,T) - C(s-1,T)
		= o(T)$: the marginal cost of influence concentration
		does not grow with the time horizon.
	\end{theorem}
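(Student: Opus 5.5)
The plan is to bound $C(s,T)$ from above by one explicit admissible strategy. Since $C(s,T)$ is an infimum (Definition~\ref{def:cost}), any feasible strategy gives an upper bound. The strategy buys a single stock once and reuses it for all $T$ windows. Concretely, at window $1$ the adversary acquires a stock of total size $R = s\,r_{\min}$. By divisibility (Definition~\ref{def:divisibility}), it splits this stock into $s$ allocations $r_{v_1}(1)=\cdots=r_{v_s}(1)=r_{\min}$ across $s$ identities, at no structural cost. Each identity is then active by Definition~\ref{def:threshold}. The influence constraint holds with equality:
\[
W_{\mathcal{A}}(1)=\sum_{i=1}^{s} f(r_{\min}) = s\,W_{\mathrm{unit}}.
\]
Additivity (Definition~\ref{def:additivity}) is used here to guarantee that this aggregate influence depends only on $R$. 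That is what makes the split, or any pooled or delegated reconfiguration, equally valid.

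The second step covers windows $t\ge 2$. By temporal reusability (Definition~\ref{def:reusability}), the same allocation carries forward with no renewed expenditure, so $e_v(t)=0$, matching Definition~\ref{def:allocation_expenditure}. If the adversary wants to rotate, retire, or replace identities, identity transferability (Definition~\ref{def:transferability}) lets it move $r_v$ to fresh identities without proportional cost. Any residual coordination cost is charged to $h(s,T)$. Hence $\sum_{t=1}^T E_{\mathcal{A}}(t)=0$, apart from whatever is already counted in the acquisition term. The feasibility constraint then holds in every window, and
\[
C(s,T)\le A(\{r_v(1)\}) + h(s,T) = C_{\mathrm{stock}}(s) + h(s,T).
\]

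Third, the asymptotic step. I will take the acquisition cost to be linear in the stock purchased, $C_{\mathrm{stock}}(s)\le c\, s\, r_{\min}=O(s)$ for a unit price $c$. This is implicit in the framework's statement that stock costs are $O(s)$, and I would state it explicitly as the price model. Then $O(s)/(sT)\to 0$ as $T\to\infty$, and Assumption~\ref{asm:coordination}(a) gives $h(s,T)=o(sT)$, so $C(s,T)=o(sT)$. For the marginal statement, apply the same construction at level $s$ and a lower bound at level $s-1$. The lower bound is $C(s-1,T)\ge$ the stock and overhead any feasible strategy must pay; at minimum it is nonnegative, and more usefully at least the optimal one-shot stock cost. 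This yields
\[
\Delta(s,T)\le c\,r_{\min} + \bigl(h(s,T)-h(s-1,T)\bigr),
\]
which is $o(T)$ by Assumption~\ref{asm:coordination}(b).

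The main obstacle is this marginal bound. Upper-bounding $C(s,T)$ alone does not control the difference $C(s,T)-C(s-1,T)$ unless I also know the optimal strategy at level $s-1$ is of the same stock-plus-overhead form. I would first try to show that under parallelizability the infimum is attained, or approached, by the stock strategy at every level. The argument: any strategy with positive flow expenditure can be replaced by buying that resource once as stock and carrying it forward, which never increases cost. That would pin $C(s,T)=c\,s\,r_{\min}+h(s,T)$ up to lower-order terms. If that exchange argument fails, the fallback is a weaker statement: the marginal cost is $o(T)$ in the averaged sense $C(s,T)/s=o(T)$, and the pointwise difference requires the additional structure of the coordination assumption.
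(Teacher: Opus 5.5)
Your argument for $C(s,T)=o(sT)$ is the paper's: one stock of $s$ allocations at $r_{\min}$, carried forward by temporal reusability so the flow term vanishes, giving $C(s,T)\le C_{\mathrm{stock}}(s)+h(s,T)$, and then Assumption~\ref{asm:coordination}(a). Your unit price $c$ only makes explicit what the paper assumes silently, namely $A(\{r_v(1)\})=s\,r_{\min}$. The gap is in the marginal claim, and you located it yourself. Your bound $\Delta(s,T)\le c\,r_{\min}+\bigl(h(s,T)-h(s-1,T)\bigr)$ requires $C(s-1,T)\ge c\,(s-1)\,r_{\min}+h(s-1,T)$, which you assert but never prove. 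The paper makes the same leap when it passes from the stock upper bound to $\Delta(s,T)\le r_{\min}+h(s,T)-h(s-1,T)$. Your exchange argument does not close it: replacing flow by stock is again an upper-bound move, whereas what is missing is a floor on what \emph{every} feasible strategy at level $s-1$ must pay. The resource side of that floor is available: additivity with $f\ge 0$ forces $f$ to be linear, so when $f\not\equiv 0$ feasibility forces aggregate allocation at least $(s-1)r_{\min}$. Converting resource into cost, however, needs a per-unit lower bound on $A$ that the framework never states. Your fallback $C(s,T)/s=o(T)$ is just the first claim restated, not the marginal one.

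You do not need the stock floor at all. In Definition~\ref{def:cost}, $h(s-1,T)$ is a strategy-independent summand and the remaining terms are nonnegative costs. So every feasible strategy at level $s-1$ costs at least $h(s-1,T)$, i.e.\ $C(s-1,T)\ge h(s-1,T)$. Combined with your upper bound, this gives $\Delta(s,T)\le c\,s\,r_{\min}+\bigl(h(s,T)-h(s-1,T)\bigr)$. The first term is independent of $T$, and the second is $o(T)$ for each fixed $s$ by Assumption~\ref{asm:coordination}(b), which is exactly the regime in which (b) is stated. You must keep the overhead term in the lower bound. Mere nonnegativity of $C(s-1,T)$, which you also mention, fails: with the paper's own illustrative $h(s,T)=s+T$ it yields only $\Delta(s,T)\le c\,s\,r_{\min}+s+T$, which is not $o(T)$.
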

	
	\begin{proof}
		By divisibility and additivity of influence, any
		aggregate allocation $r$ may be partitioned across
		$s$ identities without changing total influence:
		$\sum_v f(r_v) = f(r)$ for any partition
		$\{r_v\}$ with $\sum_v r_v = r$.
		By temporal reusability, an allocation
		$r_v(1) = r_{\min}$ acquired in window~$1$ remains
		available in all subsequent windows without renewed
		expenditure.
		Therefore the adversary achieves $s$-unit influence
		at total cost
		\begin{equation}
			\label{eq:stock_bound}
			C(s, T) \;\leq\; s \cdot r_{\min} + h(s, T),
		\end{equation}
		where $C_{\mathrm{flow}}(s,T) = 0$ since no renewal
		expenditure is required after window~$1$.
		By Assumption~\ref{asm:coordination}(a),
		$h(s,T) = o(sT)$, so
		\begin{equation}
			\frac{C(s,T)}{sT}
			\leq \frac{r_{\min}}{T} + \frac{h(s,T)}{sT}
			\to 0
			\quad \text{as } s, T \to \infty.
		\end{equation}
		From~\eqref{eq:stock_bound},
		\begin{equation}
			\Delta(s,T) \;\leq\; r_{\min} + h(s,T) - h(s-1,T).
		\end{equation}
		By Assumption~\ref{asm:coordination}(b),
		$h(s,T) - h(s-1,T) = o(T)$ for each fixed $s$,
		so $\Delta(s,T) = o(T)$.
	\end{proof}
	
	The result is structural: for any parallelizable
	resource-mechanism pair, the $T$-scaling flow cost
	$C_{\mathrm{flow}}(s,T)$ vanishes once the initial
	stock is acquired, so sustaining $s$-unit influence
	requires only $O(s)$ cost regardless of the time
	horizon.
	Enforcing linear cost therefore requires violating
	at least one of the parallelizability properties,
	whether through the choice of resource or through
	protocol mechanisms that prevent those properties
	from being realized in practice.
	\section{Linear Lower Bound Under Throughput-Bounded Resources}
	\label{sec:lowerbound}
	
	We prove the complementary result: throughput-bounded
	non-parallelizable resources structurally enforce
	$C(s,T) = \Omega(sT)$.
	
	\begin{theorem}[Linear Lower Bound]
		\label{thm:linear}
		Let $\mathcal{R}$ be a throughput-bounded
		non-parallelizable resource
		(Definition~\ref{def:nonparallelizable}).
		Then there exists $c > 0$ such that
		\begin{equation}
			C(s, T) \;\geq\; c \cdot sT,
		\end{equation}
		and in particular $C(s,T) = \Omega(sT)$.
	\end{theorem}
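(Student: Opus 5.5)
The plan is to lower-bound only the flow term $C_{\mathrm{flow}}(s,T)=\sum_{t=1}^{T}E_{\mathcal{A}}(t)$ in the decomposition of Definition~\ref{def:cost}, dropping $A(\{r_v(1)\})\geq 0$ and $h(s,T)\geq 0$. The whole argument is a per-window counting step followed by a sum over $t$. The target constant is $c = r_{\min}\cdot W_{\mathrm{unit}}/f(\tau)$, or simply $c=r_{\min}$ once we know each channel contributes at most one unit of influence.

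\textbf{Step 1 (per-window counting of active channels).} Fix any admissible strategy and any window $t$. By the per-channel rate limit of Definition~\ref{def:throughput}, $r_v(t)\le\tau$ for every $v\in S_t$. By non-transferability, the resource credited to $v$ cannot also be counted toward any other identity. Hence each active identity's influence is at most $f(\tau)$. Here I will need $f$ monotone, or at least $\sup_{r\le\tau}f(r)=:F_\tau<\infty$; I will state this as the implicit regularity of $f$, noting that when rate limits are calibrated so that $\tau$ corresponds to one participation unit, $F_\tau=W_{\mathrm{unit}}$. The constraint $W_{\mathcal{A}}(t)\ge s\,W_{\mathrm{unit}}$ then forces at least $k:=\lceil s\,W_{\mathrm{unit}}/F_\tau\rceil$ active identities in window $t$. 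By the activation threshold (Definition~\ref{def:threshold}), each of these has $r_v(t)\ge r_{\min}$.

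\textbf{Step 2 (window-locality converts allocation into expenditure).} By window-locality, none of the allocation $r_v(t)$ can be carried over from window $t-1$. Definition~\ref{def:allocation_expenditure} records this as $e_v(t)\ge r_{\min}$ for every active identity in every window. Therefore $E_{\mathcal{A}}(t)\ge k\,r_{\min}\ge s\,r_{\min}W_{\mathrm{unit}}/F_\tau$.

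\textbf{Step 3 (summation).} Summing Step 2 over $t=1,\dots,T$ gives $C_{\mathrm{flow}}\ge sT\,r_{\min}W_{\mathrm{unit}}/F_\tau$. This holds for every admissible strategy, so it also holds for the infimum. Hence $C(s,T)\ge c\,sT$ with $c=r_{\min}W_{\mathrm{unit}}/F_\tau>0$, where positivity uses $r_{\min}>0$ and $W_{\mathrm{unit}}>0$ (the latter needed so that influence levels are meaningful).

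\textbf{Main obstacle.} The delicate step is Step 1: bounding the number of identities needed from the total-influence constraint. This requires controlling $f$ on $[0,\tau]$, since without monotonicity or boundedness the rate limit says nothing about influence. It also requires using non-transferability to rule out one expenditure being credited to several identities, i.e.\ double counting across channels. I would first try to treat each channel's influence as capped at one participation unit, which is the intended reading of "each participation unit requires an independent channel." If the referee wants generality, I would fall back on the explicit constant $F_\tau$.
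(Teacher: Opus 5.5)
Your proposal is correct and follows essentially the same route as the paper, which also takes $c=r_{\min}$, uses non-transferability to require $r_v(t)\ge r_{\min}$ separately for each identity, uses window-locality to turn that allocation into fresh expenditure $E_{\mathcal{A}}(t)\ge s\,r_{\min}$ in every window, and sums over the $T$ windows. The only difference is that the paper simply asserts that $s$-unit influence needs $s$ active identities and never invokes the rate limit $\tau$, whereas your Step 1 justifies this count via $F_\tau=\sup_{r\le\tau}f(r)<\infty$ and $W_{\mathrm{unit}}>0$, recovering the paper's $c=r_{\min}$ exactly when $F_\tau=W_{\mathrm{unit}}$; your version is the more careful of the two.
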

	
	\begin{proof}
		Set $c = r_{\min}$.
		By non-transferability (condition~2 of
		Definition~\ref{def:throughput}), each active identity
		requires its own allocation: $r_v(t) \geq r_{\min}$
		must hold independently for each of the $s$ identities.
		By window-locality (condition~3), allocation does not
		carry forward, so new expenditure equals allocation
		in every window.
		Therefore $E_{\mathcal{A}}(t) \geq s \cdot r_{\min}$
		for every $t$, and
		\begin{equation}
			C(s, T) \;\geq\; \sum_{t=1}^{T} E_{\mathcal{A}}(t)
			\;\geq\; s \cdot T \cdot r_{\min}
			\;=\; c \cdot sT.
		\end{equation}
	\end{proof}
	
	Unlike parallelizable resources, expenditure cannot
	be amortized across identities or across time.
	Each active identity requires renewed expenditure
	in every window, causing both the number of
	participants and the time horizon to contribute
	linearly to cost.
	
	\section{Structural Separation and Design Consequences}
	\label{sec:separation}
	
	Theorems~\ref{thm:impossible} and~\ref{thm:linear}
	characterize two fundamentally different classes of
	security-weighting resources.
	We now combine these results into a single structural
	separation and derive its implications for mechanism
	design.
	
	\subsection{Structural Separation}
	\label{ssec:separation_theorem}
	
	\begin{theorem}[Structural Separation]
		\label{thm:separation}
		Let $(\mathcal{R}, f)$ be a security-weighting
		resource-mechanism pair.
		\begin{enumerate}
			\item If $\mathcal{R}$ is parallelizable, then
			$C(s,T) = o(sT)$.
			\item If $\mathcal{R}$ is a throughput-bounded
			non-parallelizable resource, then
			$C(s,T) = \Omega(sT)$.
		\end{enumerate}
		The two classes induce asymptotically disjoint
		scaling regimes.
	\end{theorem}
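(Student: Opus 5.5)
The plan is to assemble the Structural Separation theorem directly from the two preceding results, and then add a short argument for the disjointness claim. For item~1, I will take a parallelizable pair $(\mathcal{R},f)$ in the sense of Definition~\ref{def:parallelizable}, working under Assumption~\ref{asm:coordination} as in Theorem~\ref{thm:impossible}. Theorem~\ref{thm:impossible} then gives $C(s,T)=o(sT)$ immediately, via the stock bound $C(s,T)\le s\,r_{\min}+h(s,T)$. For item~2, I will take a pair satisfying Definition~\ref{def:nonparallelizable}. Theorem~\ref{thm:linear} supplies the explicit constant $c=r_{\min}>0$ with $C(s,T)\ge r_{\min}\,sT$ for all $s,T$, hence $C(s,T)=\Omega(sT)$.

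For the final sentence, the plan is to make ``asymptotically disjoint'' precise as follows: no single cost function can satisfy both regimes. Suppose some $C$ satisfied both $C(s,T)\ge c\,sT$ for all $s,T$ and $C(s,T)/(sT)\to 0$ as $s,T\to\infty$. Choosing $s,T$ large enough that $C(s,T)/(sT)<c/2$ gives a contradiction. So the regime $o(sT)$ and the regime $\Omega(sT)$ are mutually exclusive as properties of $C$. I would add the marginal-cost version as well. Theorem~\ref{thm:impossible} gives $\Delta(s,T)=o(T)$ in the parallelizable case. In the throughput-bounded case, the per-identity, per-window expenditure argument in the proof of Theorem~\ref{thm:linear} says that raising the influence requirement from $s-1$ to $s$ forces one more active identity in every window. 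This should give $\Delta(s,T)\ge r_{\min}T$, contrasting $o(T)$ with $\Omega(T)$.

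The main obstacle is that the two hypotheses are not literally mutually exclusive as properties of $\mathcal{R}$. For example, divisibility together with $r_v(t)\le\tau$ is not contradictory. The disjointness therefore has to be stated at the level of the induced cost scaling, not of the defining properties. I would handle this by proving the contrapositive consequence: a resource falling under both definitions would have to satisfy both bounds, which is impossible by the argument above. Hence no resource-mechanism pair can meet both sets of hypotheses with Assumption~\ref{asm:coordination} in force.

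For the $\Delta$ lower bound, the delicate point is that the infimum defining $C(s-1,T)$ may use a different strategy from the one used for $C(s,T)$. I would address this by exhibiting an explicit optimal strategy in the throughput-bounded case: exactly $s$ identities, each holding $r_{\min}$ in every window. This gives the equality $C(s,T)=s\,T\,r_{\min}+h(s,T)$, from which the marginal bound follows using Assumption~\ref{asm:coordination}(b).
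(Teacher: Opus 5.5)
Your proposal is correct and follows the paper's proof. Items 1 and 2 are exactly Theorem~\ref{thm:impossible} (under Assumption~\ref{asm:coordination}) and Theorem~\ref{thm:linear}, and your $c/2$ argument spells out the paper's one-line remark that no cost function can be both $o(sT)$ and $\Omega(sT)$. Two side remarks, neither affecting the proof: the marginal-cost extension is not needed here, and its claim that exactly $s$ identities at $r_{\min}$ is optimal would need extra assumptions on $f$ over $[r_{\min},\tau]$ (it fails for superlinear $f$); also, contrary to your remark, the two classes are already disjoint by definition, since temporal reusability directly contradicts window-locality.
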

	
	\begin{proof}
		Part~(1) is Theorem~\ref{thm:impossible}.
		Part~(2) is Theorem~\ref{thm:linear}.
		Since $o(sT)$ and $\Omega(sT)$ are asymptotically
		disjoint, no resource can exhibit both regimes.
	\end{proof}
	
	Theorem~\ref{thm:separation} establishes that the
	distinction between amortizable and non-amortizable
	influence is not quantitative but structural.
	Parallelizable resources admit sublinear scaling,
	whereas throughput-bounded non-parallelizable resources
	enforce linear scaling.
	No resource satisfying one class definition can
	simultaneously exhibit the asymptotic behavior of
	the other.
	
	\subsection{Resource Substitution Theorem}
	\label{ssec:rsp}
	
	\begin{theorem}[Resource Substitution Theorem]
		\label{thm:substitution}
		Any mechanism targeting $C(s,T) = \Omega(sT)$ must
		ground participation in a resource that violates at
		least one of the following properties:
		divisibility, additivity of influence,
		temporal reusability, identity transferability.
	\end{theorem}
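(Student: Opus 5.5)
The Resource Substitution Theorem is the contrapositive of Theorem~\ref{thm:impossible}, combined with the disjointness of the two regimes from Theorem~\ref{thm:separation}. I argue by contradiction. Suppose a mechanism achieves $C(s,T) = \Omega(sT)$, yet grounds participation in a resource $\mathcal{R}$ whose induced pair $(\mathcal{R}, f)$ satisfies all four of divisibility, additivity of influence, temporal reusability, and identity transferability. By Definition~\ref{def:parallelizable}, $(\mathcal{R}, f)$ is then parallelizable. Under Assumption~\ref{asm:coordination}, Theorem~\ref{thm:impossible} gives $C(s,T) = o(sT)$.

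Next I derive the contradiction explicitly rather than just citing ``asymptotic disjointness.'' $\Omega(sT)$ gives constants $c > 0$ and $s_0, T_0$ with $C(s,T) \geq c\,sT$ for all $s \geq s_0$, $T \geq T_0$. $o(sT)$ gives $C(s,T)/(sT) \to 0$ as $s, T \to \infty$. Choosing $s, T$ large enough that $C(s,T)/(sT) < c$ contradicts the first inequality. Hence at least one of the four properties must fail, which is the claim.

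The main obstacle is pinning down the hypotheses, not the logic. First, the statement must be read as conditional on Assumption~\ref{asm:coordination}. Without it, the overhead $h(s,T)$ could itself be $\Omega(sT)$, and a parallelizable resource could then trivially meet the target. I will state this dependence explicitly, just as Theorem~\ref{thm:impossible} does.

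Second, ``ground participation in a resource'' must mean that the influence constraint $W_{\mathcal{A}}(t) \geq s\,W_{\mathrm{unit}}$ in Definition~\ref{def:cost} is evaluated through the pair $(\mathcal{R}, f)$. Then the four properties are properties of the realized pair, not merely of the nominal resource. Reading it this way covers the remark after Theorem~\ref{thm:impossible}: a protocol that blocks one property in practice, for example by disabling transfer, already counts as a violation, and the statement stays consistent.

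Finally, I will note what the argument does not prove: violating a property is necessary but not sufficient. Sufficiency requires the stronger throughput-bounded conditions of Theorem~\ref{thm:linear}; energy (Table~\ref{tab:taxonomy}) violates reusability but falls outside both classes.
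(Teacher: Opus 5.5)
Your proof is correct and follows the paper's argument: if all four properties hold, the pair is parallelizable, so Theorem~\ref{thm:impossible} gives $C(s,T)=o(sT)$, which contradicts the $\Omega(sT)$ target. Your additions are accurate refinements of the same route, not a different one: you spell out why $o(sT)$ and $\Omega(sT)$ are incompatible, and you state that the conclusion depends on Assumption~\ref{asm:coordination}, which the paper's one-line proof uses only implicitly through Theorem~\ref{thm:impossible}.
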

	
	\begin{proof}
		If the underlying resource satisfies all four
		properties it is parallelizable
		(Definition~\ref{def:parallelizable}), and
		Theorem~\ref{thm:impossible} gives
		$C(s,T) = o(sT)$, a contradiction.
	\end{proof}
	
	\begin{corollary}
		\label{cor:protocol}
		For any resource-mechanism pair satisfying
		Definition~\ref{def:parallelizable} and
		Assumption~\ref{asm:coordination},
		Theorem~\ref{thm:impossible} implies that no
		protocol-level modification preserving these
		properties can enforce $C(s,T) = \Omega(sT)$.
		Enforcing linear cost therefore requires violating
		at least one parallelizability property.
	\end{corollary}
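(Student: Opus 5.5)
The corollary follows directly from Theorem~\ref{thm:impossible}, by contradiction, together with a definitional observation about what a ``protocol-level modification preserving these properties'' is. I will formalize such a modification as a map from the original pair $(\mathcal{R}, f)$ to a new pair $(\mathcal{R}', f')$. Here $\mathcal{R}'$ is the resource as it is actually usable under the modified protocol, with whatever extra rules on allocation, reassignment and carry-over the protocol imposes, and $f'$ is the modified resource-to-influence function. The phrase ``preserving these properties'' is then read as: $\mathcal{R}'$ is still divisible, temporally reusable and identity-transferable, and $f'$ still induces additive influence. I will also assume the coordination overhead $h'$ of the modified system still satisfies Assumption~\ref{asm:coordination}. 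In that case $(\mathcal{R}', f')$ satisfies Definition~\ref{def:parallelizable}, which is exactly the hypothesis of Theorem~\ref{thm:impossible}.

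The key steps, in order, are these.

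\emph{Step 1.} Suppose, for contradiction, that some property-preserving modification achieves $C'(s,T) \geq c\, sT$ for some $c>0$ and all sufficiently large $s, T$.

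\emph{Step 2.} Apply Theorem~\ref{thm:impossible} to $(\mathcal{R}', f')$. This gives the explicit strategy bound $C'(s,T) \leq s\, r'_{\min} + h'(s,T)$, and hence $C'(s,T)/(sT) \to 0$.

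\emph{Step 3.} Since $C'(s,T)/(sT) \geq c > 0$ for all large $s, T$, these two conclusions are incompatible. By the same disjointness argument used in Theorem~\ref{thm:separation}, this is a contradiction.

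\emph{Step 4.} Take the contrapositive. Any protocol that does enforce $C(s,T) = \Omega(sT)$ must yield a pair in which at least one of divisibility, additivity, reusability or transferability fails. This is Theorem~\ref{thm:substitution} restated at the level of protocol modifications.

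The main obstacle is Step~2, namely making sure the hypotheses of Theorem~\ref{thm:impossible} transfer intact to the modified pair. There are two concrete points to check.

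\begin{enumerate}
  \item The activation threshold may change to some $r'_{\min} > 0$. This is harmless, because the bound $s\, r'_{\min}/(sT)$ still vanishes as $T \to \infty$.
  \item The protocol might inject a $T$-proportional per-identity fee into the coordination term $h'$. That would violate Assumption~\ref{asm:coordination}(a), so it falls outside the corollary's scope. I would make this explicit by including Assumption~\ref{asm:coordination} in what the modification must preserve, as the statement already does.
\end{enumerate}

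With the modification interpreted this way, the corollary follows in a few lines.
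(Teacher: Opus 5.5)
Your proposal is correct and follows the paper's route. The paper gives no separate proof of this corollary; it treats it as immediate from Theorem~\ref{thm:impossible} via the same contradiction used for Theorem~\ref{thm:substitution}: a property-preserving modification under Assumption~\ref{asm:coordination} is again a parallelizable pair, so its cost is $o(sT)$, which is incompatible with $\Omega(sT)$. Your explicit choice of a uniform constant $c$ over all sufficiently large $s,T$, matched against the joint limit in Theorem~\ref{thm:impossible}, is exactly the reading under which the two regimes are disjoint, so stating it is a good touch.
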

	
	The Resource Substitution Theorem is the primary
	design consequence of this work.
	Within the class of parallelizable resources, linear
	influence cost cannot be achieved through protocol
	engineering alone.
	Any mechanism seeking $C(s,T) = \Omega(sT)$ must
	ultimately rely on a resource that violates at least
	one of the four parallelizability properties.
	The boundary between amortizable and non-amortizable
	influence is therefore determined by resource
	structure rather than by protocol rules.
	
	\section{Instantiations and Classification}
	\label{sec:instantiations}
	
	This section instantiates the framework by classifying
	representative resource types according to the
	structural properties of
	Definitions~\ref{def:divisibility}--\ref{def:throughput}.
	
	\subsection{Per-Actor Throughput Channels}
	\label{ssec:channels}
	
	\begin{definition}[Per-Actor Throughput Channel]
		\label{def:channel}
		A participation channel $C_u$ associated with
		actor $u$ is a \emph{per-actor throughput channel}
		if there exists $\tau > 0$ such that for every
		window $t$, $\mathrm{rate}_t(C_u) \leq \tau$, and
		$r_v(t) \leq \mathrm{rate}_t(C_u)$ for any identity
		$v$ backed by $C_u$.
	\end{definition}
	
	\begin{proposition}[Per-Actor Channel Classification]
		\label{prop:channel}
		Any resource derived from per-actor throughput
		channels is a throughput-bounded non-parallelizable
		resource (Definition~\ref{def:nonparallelizable}).
		Consequently, $C(s,T) = \Omega(sT)$ by
		Theorem~\ref{thm:linear}.
	\end{proposition}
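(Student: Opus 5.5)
The plan is to check, one by one, the three conditions of Definition~\ref{def:throughput} for a resource derived from per-actor throughput channels (Definition~\ref{def:channel}). That establishes membership in the class of Definition~\ref{def:nonparallelizable}, and Theorem~\ref{thm:linear} then gives $C(s,T)=\Omega(sT)$ at once, with $c=r_{\min}$. The argument is a verification, not a new estimate. I will assume every identity $v$ is backed by exactly one channel $C_u$, and I will take its dedicated channel $C_v$ to be that backing channel.

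\textbf{Step 1 (rate limit).} For any $v$ backed by $C_u$, Definition~\ref{def:channel} gives $r_v(t)\le \mathrm{rate}_t(C_u)\le\tau$. This holds for every $t$ with the same $\tau$, so condition~1 follows directly. I also need the standing requirement $r_{\min}\le\tau$. Otherwise no identity could ever be active, and the statement would hold vacuously under the convention that the infimum is $+\infty$.

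\textbf{Step 2 (non-transferability and window-locality).} This is where I expect the real difficulty. Definition~\ref{def:channel} only constrains rates, so both properties have to be read off the meaning of ``derived from per-actor throughput channels.'' For non-transferability, I would argue as follows. The allocation $r_v(t)$ is the output of the channel $C_u$ in window $t$. Crediting it to two identities would take output $2r_{\min}$ in the same window from that channel. Binding one identity per channel prevents this. If several identities shared a channel, the channel's rate budget $\mathrm{rate}_t(C_u)$ would have to be split among them, which is again a per-channel bound.

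For window-locality, $\mathrm{rate}_t$ is indexed per window. It measures throughput produced within window $t$, not a stock, so output from window $t$ cannot satisfy the requirement in window $t+1$. I will state explicitly that the proposition reads ``derived from'' as including both the channel-binding condition (the footnote to Table~\ref{tab:taxonomy}) and this per-window interpretation of rate. Without those readings the claim fails: a rate-limited account whose output can be banked or reassigned would be bypassable.

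\textbf{Step 3 (conclusion).} Having shown conditions~1--3, the resource satisfies Definition~\ref{def:throughput}, so it is throughput-bounded non-parallelizable. Theorem~\ref{thm:linear} yields $E_{\mathcal{A}}(t)\ge s\,r_{\min}$ in every window and hence $C(s,T)\ge r_{\min}\,sT$.

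If the shared-channel case must be handled more sharply, I would add a short bound. The number of active identities in a window is at most $\sum_u \lfloor \mathrm{rate}_t(C_u)/r_{\min}\rfloor$. Sustaining $s$ units therefore forces at least $s r_{\min}/\tau$ channels' worth of fresh per-window output, which still gives $\Omega(sT)$, with the constant depending on $r_{\min}/\tau$.
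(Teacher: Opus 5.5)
Your proposal is correct and follows the paper's proof: you check the three conditions of Definition~\ref{def:throughput} and then invoke Theorem~\ref{thm:linear}, taking the rate limit directly from Definition~\ref{def:channel} and non-transferability and window-locality from what a per-actor channel means. You also state explicitly that those last two conditions are readings of ``derived from per-actor throughput channels'' rather than consequences of the literal wording of Definition~\ref{def:channel}, and you handle shared channels; both points are more careful than the paper, which simply asserts the two conditions.
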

	
	\begin{proof}
		Throughput boundedness follows from the rate limit
		$\tau$ of Definition~\ref{def:channel}.
		Non-transferability holds because $C_u$'s output
		cannot be credited to an identity under a different
		actor.
		Window-locality holds because $\mathrm{rate}_t(C_u)$
		does not carry over to window $t+1$.
		Therefore Definition~\ref{def:throughput} is
		satisfied, and the resource belongs to the
		throughput-bounded non-parallelizable class.
	\end{proof}
	
	\subsection{Classification of Deployed Resources}
	\label{ssec:walkthrough}
	
	\noindent\textbf{Proof-of-Work (hardware component).}\ 
	Mining hardware is divisible, transferable,
	temporally reusable, and induces additive influence.
	It therefore satisfies the parallelizable resource
	class, implying $C(s,T) = o(sT)$ by
	Theorem~\ref{thm:impossible}.
	Mining pools provide an empirical manifestation of
	this amortization~\cite{gencer2018decentralization,
		rosenfeld2011analysis}.
	
	\noindent\textbf{Proof-of-Stake (financial capital).}\ 
	Financial stake satisfies the same four
	parallelizability properties.
	It therefore induces $C(s,T) = o(sT)$ by
	Theorem~\ref{thm:impossible}.
	Large staking pools and delegation systems provide
	empirical manifestations of this amortization pattern.
	
	\noindent\textbf{Device-bound execution.}\ 
	Each trusted execution environment constitutes one
	per-actor throughput channel.
	By Proposition~\ref{prop:channel},
	$C(s,T) = \Omega(sT)$, subject to the deployment
	fidelity conditions discussed in
	Section~\ref{ssec:lim_enforcement}.
	The classification is structural: it characterizes
	the resource as defined by
	Definition~\ref{def:throughput}, not any particular
	hardware implementation.
	In practice, SGX-like environments may deviate from
	this idealization through relay attacks, remote
	attestation by a third party, virtualization of the
	enclave, or leasing and outsourcing of attestation
	capacity to other actors---each of which can weaken
	non-transferability or window-locality.
	To the extent that such deviations are absent, the
	resource approximates a throughput-bounded channel
	under the framework's assumptions; the consequences
	of partial deviation are discussed further in
	Section~\ref{sec:limitations}.
	
	\noindent\textbf{Human real-time participation.}\ 
	Each participant constitutes one channel with
	intrinsic per-window throughput limits.
	By Proposition~\ref{prop:channel},
	$C(s,T) = \Omega(sT)$, subject to the same
	deployment fidelity conditions.
	As with device-bound execution, this classification
	is structural and reflects the idealized properties
	of Definition~\ref{def:throughput} rather than any
	specific deployment.
	Deployed human-participation systems may depart from
	this idealization through credential markets that
	resell verified identities, outsourcing of
	participation tasks to third parties, or automation
	that mimics human responses---each of which can
	erode non-transferability or the per-channel rate
	limit.
	Provided such practices remain limited, the resource
	approximates a throughput-bounded channel under the
	framework's assumptions; Section~\ref{sec:limitations}
	discusses the open problem of characterizing this
	approximation more precisely.
	
	\noindent\textbf{Rate-limited network participation.}
	Account-level rate limits qualify as throughput-bounded
	only when bound to non-transferable execution channels.
	Otherwise, limits attached solely to syntactic identities
	can be circumvented through identity replication.
	\section{ILLUSTRATING THE STRUCTURAL SEPARATION}
	\label{sec:evaluation}
	
	The following figures illustrate the asymptotic
	behavior implied by Theorems~\ref{thm:impossible}
	and~\ref{thm:linear} for a representative coordination
	overhead $h(s,T)=s+T$ satisfying
	Assumption~\ref{asm:coordination}; they are not
	empirical measurements.
	With this choice,
	
	\[
	\frac{C_{\mathrm{par}}(s,T)}{sT}
	=
	\frac{r_{\min}+1}{T}
	+
	\frac{1}{s},
	\]
	
	which vanishes only in the joint limit $s,T\to\infty$, not along fixed-$s$ or fixed-$T$ slices.
	
	\subsection{Cost Scaling and Joint-Asymptotic Convergence}
	\label{ssec:eval_scaling}
	
	Figure~\ref{fig:joint_asymptotic} shows the normalised ratio along the diagonal $s=T$. Along this diagonal,
	
	\[
	\frac{C_{\mathrm{par}}(s,s)}{s^2}
	=
	\frac{r_{\min}+2}{s}
	\to 0
	\qquad\text{as } s\to\infty,
	\]
	
	confirming Theorem~\ref{thm:impossible} in the joint regime. The throughput-bounded ratio remains fixed at $r_{\min}$ (Theorem~\ref{thm:linear}).
	
	\begin{figure}[htbp]
		\centering
		\begin{tikzpicture}
			\begin{axis}[
				width=0.85\columnwidth,
				height=5.5cm,
				xmode=log,
				xlabel={$s = T$},
				ylabel={Normalised ratio $C(s,T)/sT$},
				xmin=1, xmax=10000,
				ymin=0, ymax=4.5,
				legend pos=north east,
				legend style={font=\small},
				grid=major,
				tick label style={font=\small},
				label style={font=\small},
				]
				\addplot[blue, thick, domain=1:10000, samples=200]
				{(0.5+2)/x};
				\addlegendentry{Par., $r_{\min}=0.5$}
				
				\addplot[red, thick, domain=1:10000, samples=200]
				{(1.0+2)/x};
				\addlegendentry{Par., $r_{\min}=1.0$}
				
				\addplot[orange, thick, domain=1:10000, samples=200]
				{(2.0+2)/x};
				\addlegendentry{Par., $r_{\min}=2.0$}
				
				\addplot[blue, dashed, thick, domain=1:10000]
				{0.5};
				\addlegendentry{TB, $r_{\min}=0.5$}
				
				\addplot[red, dashed, thick, domain=1:10000]
				{1.0};
				\addlegendentry{TB, $r_{\min}=1.0$}
				
				\addplot[orange, dashed, thick, domain=1:10000]
				{2.0};
				\addlegendentry{TB, $r_{\min}=2.0$}
			\end{axis}
		\end{tikzpicture}
		\caption{Normalised ratio $C(s,T)/sT$ along the joint diagonal $s=T$. Solid curves satisfy $(r_{\min}+2)/s \to 0$, illustrating the asymptotic behavior established by Theorem~\ref{thm:impossible}. Dashed lines remain fixed at $r_{\min}$, illustrating the linear-cost regime of Theorem~\ref{thm:linear}.}
		\label{fig:joint_asymptotic}
	\end{figure}
	
	\subsection{Non-Amplification Under Identity Replication}
	\label{ssec:eval_nonamplification}
	
	Under a throughput-bounded resource, influence is determined by channel capacity, not identity count. An adversary controlling $m$ channels and $s\in\{400,700,1000\}$ identities against $n=200$ honest validators achieves influence share $m/(m+n)$, independently of $s$. Since $s>m$ throughout, all three curves in Figure~\ref{fig:influence_share} overlap exactly: multiplying adversarial identities by $2.5$ yields zero influence gain, confirming the non-amplification consequence of Theorem~\ref{thm:linear}.
	
	\begin{figure}[htbp]
		\centering
		\includegraphics[width=0.72\columnwidth]{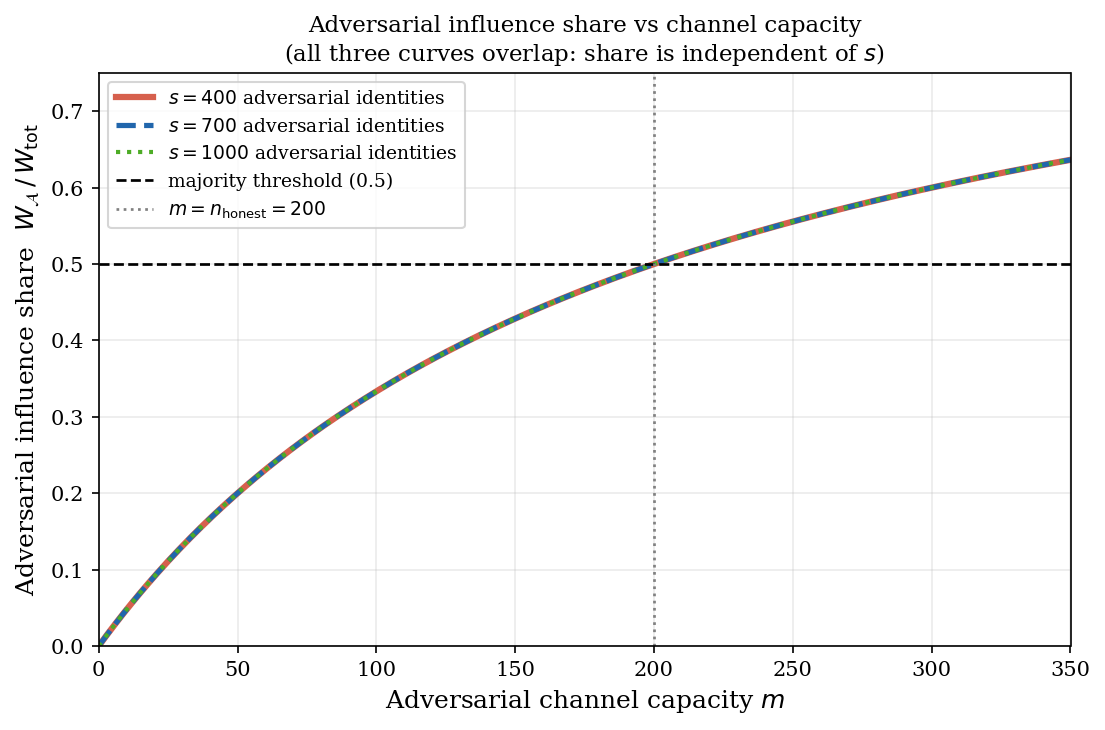}
		\caption{Adversarial influence share vs.\ channel capacity $m$, for $s\in\{400,700,1000\}$ and $n_{\rm honest}=200$. All curves overlap: influence share equals $m/(m+n)$, independent of $s$. Identity replication alone cannot increase influence; additional influence requires additional throughput channels (Theorem~\ref{thm:linear}).}
		\label{fig:influence_share}
	\end{figure}

	\section{Limitations and Open Problems}
	\label{sec:limitations}
	
	The structural separation established in this paper
	characterizes adversarial cost scaling under explicit
	modeling assumptions.
	Several limitations and open directions remain.
	
	\subsection{Enforcement of Throughput Constraints}
	\label{ssec:lim_enforcement}
	
	The framework assumes that the structural properties
	of Definition~\ref{def:throughput}---non-transferability,
	window-locality, and per-channel rate limits---hold
	in the deployment.
	In practice, mechanisms may only approximate these
	properties: device attestations can be rented or
	remotely operated, human participation can be
	outsourced, and advances in automation may effectively
	increase~$\tau$, weakening the per-channel rate
	constraint.
	Characterizing the degradation of the
	$C(s,T) = \Omega(sT)$ guarantee as a function of
	deployment-level fidelity remains an open problem.
	
	\subsection{Non-Additive Hybrid Compositions}
	\label{ssec:lim_hybrid}
	
	The current analysis establishes preservation of the
	linear lower bound only under the additive hybrid
	model of Section~\ref{sec:separation}, where total
	adversarial cost decomposes as
	$C(s,T) = C_p(s,T) + C_t(s,T)$ and a throughput-bounded
	component gates all influence units.
	Two open directions remain.
	First, non-additive compositions---where the two
	resource classes interact in ways not captured by
	additive decomposition---fall outside the present
	scope; it is not yet known whether the linear lower
	bound survives under such interactions.
	Second, partial gating arrangements, in which a
	throughput-bounded component constrains only a
	fraction~$\beta$ of influence units, may yield
	intermediate scaling regimes of the form
	$C(s,T) = \Omega(\beta \cdot sT)$, but a general
	characterization remains open.
	
	\subsection{Economic Coordination and Identity Markets}
	\label{ssec:lim_markets}
	
	The model abstracts away economic coordination
	dynamics.
	In practice, delegation markets, identity leasing,
	credential outsourcing, and secondary markets for
	rate-limited participation rights may alter the
	effective adversarial cost structure even when the
	underlying structural resource properties remain
	unchanged.
	An adversary with access to liquid identity markets
	may, for instance, effectively circumvent
	non-transferability constraints by acquiring
	participation slots through market intermediaries
	rather than through direct channel ownership.
	Integrating equilibrium analysis---capturing
	strategic outsourcing, adaptive pricing, and
	rational credential supply---with the structural
	framework developed here is an important direction
	for future research.
	
	\subsection{Scope of Assumption~\ref{asm:coordination}}
	\label{ssec:lim_assumption}
	
	Assumption~\ref{asm:coordination} provides a
	sufficient condition for the sublinear cost result of
	Theorem~\ref{thm:impossible}.
	Its purpose is not to model all conceivable
	coordination structures, but to isolate the
	contribution of resource reusability and
	transferability from coordination-specific effects.
	It reflects the coordination overhead observed in
	deployed infrastructures such as mining pools and
	validator delegation services, both of which exhibit
	coordination costs consistent with
	$h(s,T)=O(s+T)$
	\cite{rosenfeld2011analysis,saleh2021blockchain}.
	Whether the sublinear cost result extends to weaker
	coordination assumptions---in which overhead grows
	faster than $O(s + T)$ but remains $o(sT)$---is
	an open theoretical problem.
	
	\subsection{Window Granularity and Temporal Modeling}
	\label{ssec:lim_windows}
	
	The model partitions time into discrete windows of
	fixed duration.
	Smaller windows increase renewal frequency under
	throughput-bounded resources; larger windows may
	weaken window-local constraints.
	A window-independent formulation expressing the
	separation in terms of wall-clock time remains an
	open direction.
	
	\subsection{Scope of the Framework}
	\label{ssec:lim_scope}
	
	The framework is structural and protocol-agnostic:
	it characterizes the resource-level preconditions
	that any mechanism must satisfy, and does not replace
	deployment-specific security analyses, probabilistic
	consensus proofs, or economic equilibrium models.
	Characterizing intermediate scaling regimes---where
	structural properties are only partially satisfied---in
	full generality remains an open theoretical problem.
	The model is also resource-centric and static: it
	characterizes the cost $C(s,T)$ induced by a fixed
	resource-mechanism pair, but does not model settings
	in which the pair $(\mathcal{R}, f)$ itself evolves
	over time---for example, through protocol updates
	made in response to observed adversarial behavior.
	Extending the framework to such dynamic
	resource-mechanism pairs is an important direction
	for future work.
	\section{Conclusion}
	\label{sec:conclusion}
	
	Binding influence to a scarce resource prevents cheap
	identity creation from amplifying adversarial power---
	but scarcity alone is insufficient.
	The structural properties of the resource-mechanism pair
	determine whether influence cost is amortizable over time
	or necessarily recurring.
	
	This paper provides an axiomatic resource taxonomy
	connecting structural properties to adversarial cost
	scaling $C(s,T)$.
	The negative half of the taxonomy (Theorem~\ref{thm:impossible})
	establishes that any resource-mechanism pair satisfying
	divisibility, additivity of influence, temporal reusability,
	and identity transferability admits influence amortization:
	$C(s,T) = o(sT)$, with marginal cost $\Delta(s,T) = o(T)$.
	The positive half (Theorem~\ref{thm:linear}) identifies
	throughput-bounded, non-transferable, window-local resources
	as a class that provably enforces $C(s,T) = \Omega(sT)$,
	with marginal cost $\Delta(s,T) = \Omega(T)$.
	Theorems~\ref{thm:impossible} and~\ref{thm:linear}
	together establish a sharp asymptotic separation
	between the two resource classes
	(Theorem~\ref{thm:separation}).
	
	The Resource Substitution Theorem
	(Theorem~\ref{thm:substitution}) translates this
	separation into a design rule: enforcing linear
	cost of influence concentration requires grounding
	participation in a resource that violates at least
	one parallelizability property.
	Within the scope of Definition~\ref{def:parallelizable},
	no protocol rule can enforce this boundary without a
	structural change in the underlying resource.
	
	More broadly, these results suggest that
	decentralization cannot be analyzed solely at the
	protocol layer.
	The structural properties of the underlying resource
	constitute a first-order design choice that shapes
	the long-term economics of influence concentration.
	
	Future directions include the formal characterization
	of intermediate scaling regimes under partial
	violations of parallelizability, equilibrium analysis
	under adaptive incentives and identity markets, and
	the design of practical mechanisms that enforce
	throughput-boundedness while preserving
	permissionlessness and accessibility.

	\bibliographystyle{plain}
	\bibliography{references}
	
\end{document}